\newtheoremstyle{note}
{3pt}
{1pt}
{}
{\parindent}
{\itshape}
{:}
{.5em}
{\thmname{#1}\thmnumber{ #2}\thmnote{\thmnote{ (#3)}}}
\theoremstyle{note}
\newtheorem{theorem}{Theorem}
\newtheorem{lemma}{Lemma}
\newtheorem{remark}{Remark}
\begin{document}
\sloppy
\title{Polar Lattices: Where Ar{\i}kan Meets Forney}

\author{\IEEEauthorblockN{}
\IEEEauthorblockA{\\
\\
\\
Email: \{y.yan10, c.ling\}@imperial.ac.uk}
\and
\IEEEauthorblockN{Xiaofu Wu}
\IEEEauthorblockA{Nanjing University of Posts and Telecommunications\\
Nanjing, China\\
Email: xfuwu@ieee.org}
}
\author{
  \IEEEauthorblockN{Yanfei Yan and Cong Ling}
  \IEEEauthorblockA{Department of Electrical and Electronic Engineering\\
    Imperial College London\\
    London, UK\\
    Email: \{y.yan10, c.ling\}@imperial.ac.uk}
  \and
  \IEEEauthorblockN{Xiaofu Wu}
  \IEEEauthorblockA{Nanjing University of Posts and Telecommunications\\
    Nanjing, China\\
    Email: xfuwu@ieee.org}
}



\maketitle

\begin{abstract}
In this paper, we propose the explicit construction of a new class of lattices based on polar codes, which are provably good for the additive white Gaussian noise (AWGN) channel. We follow the multilevel construction of Forney \textit{et al.} (i.e., Construction D), where the code on each level is a capacity-achieving polar code for that level. The proposed polar lattices are efficiently decodable by using multistage decoding. Performance bounds are derived to measure the gap to the generalized capacity at given error probability. A design example is presented to demonstrate the performance of polar lattices.
\end{abstract}

\section{Introduction}
%
%
%
%

Polar codes, proposed by Ar{\i}kan in \cite{polarcodes}, can
provably achieve the capacity of binary memoryless
symmetric (BMS) channels. The authors in \cite{polarconstruction} and \cite{Ido}
proposed efficient algorithms to construct polar codes for BMS
channels with almost linear complexity. An attempt (from the theoretic side) to construct polar codes for the additive white Gaussian noise (AWGN) channel was based on the technique for multiple- access channels \cite{Abbe_tran}, \cite{Abbe}. Until now, however it is still an open problem to
construct practical polar codes to achieve the capacity of the AWGN channel.

Lattice codes are the counterpart of linear codes in the Euclidean space that are known to achieve the AWGN channel capacity, based on the random coding argument \cite{zamir}. At the core of lattice coding is an \emph{AWGN-good} lattice, whose error probability, for infinite lattice coding, vanishes as long as the volume-to-noise ratio (VNR) is larger than 1. Such a lattice is also referred to as achieving the generalized capacity \cite{poltyrev} or the sphere bound \cite{forney6}. Several practical lattice constructions have been introduced recently, mostly inspired by low-density parity check (LDPC) codes and turbo codes \cite{ldpclattice,IntegerLDLC,ldlc,turbolattice}. However, there is no proof that such lattices are AWGN-good. This paper will focus on the the design of the AWGN-good lattice, while the issue of a finite power constraint can be dealt with by the standard shaping technique (see, e.g., \cite{zamir}).

In this paper, we propose polar lattices to solve both problems aforementioned. Polar lattices are explicit AWGN-good lattices constructed from polar codes. More precisely, we use Forney \textit{et al.}'s multilevel construction \cite{forney6}, where for each level we build a polar code to achieve its capacity. The analysis of \cite{forney6} shows that if the parameters of the component lattices are carefully chosen and if the component
codes on all levels are capacity-achieving, then the resulted lattice will achieve the sphere bound.

This paper is built on the basis of our prior attempt \cite{Yan}, where we followed the structure of Barnes-Wall lattices, but changed the Reed-Muller code to a polar code on each level. By applying the \emph{capacity rule} \cite{forney6} rather than the Barnes-Wall rule, the performance of our polar lattices has been greatly improved, as shown in Fig. \ref{fig:1D_bw}. The encoding and decoding complexity of polar lattices is almost the same as that of Barnes-Wall lattices. We use the same multistage decoder for both polar lattices and Barnes-Wall lattices. The reason for the relatively poor performance of Barnes-Wall lattices is that they violate the capacity rule: the rates of some component codes exceed the channel capacities of the corresponding levels. More discussion will be given later.

\begin{figure}
    \centering
    \includegraphics[width=8.9cm]{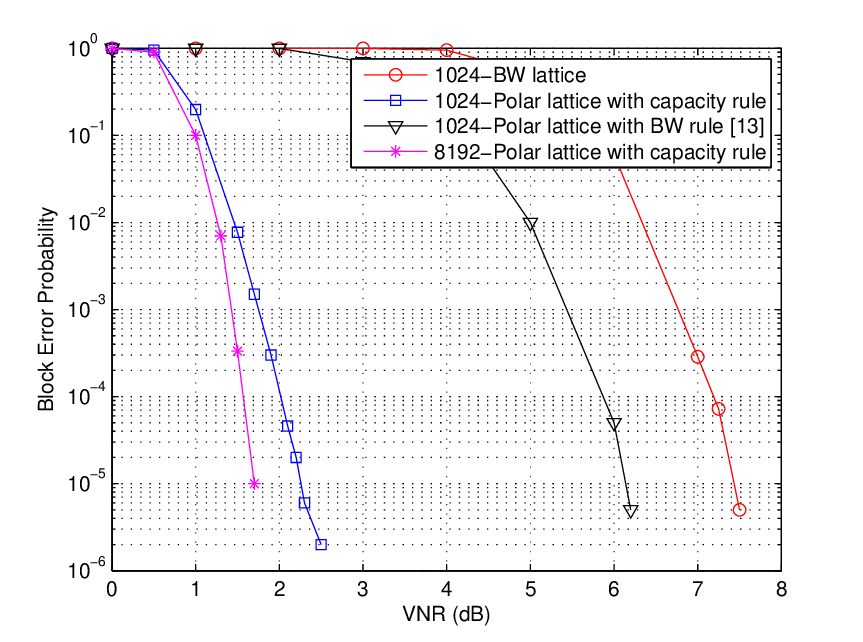}
    \vspace{-12pt}
    \caption{Block error probabilities of polar lattices with length $N=1024$ and $N=8192$ under multistage decoding.}
    \vspace{-12pt}
    \label{fig:1D_bw}
\end{figure}

At the final stage of this paper, we became aware of a related work \cite{multilevelpolar} where polar codes were used in multilevel coded modulation. However, \cite{multilevelpolar} is not about the construction of lattices, and no bounds on the error performance were derived.

Our paper is organized as follows.
Section II presents the background of lattice theory and multilevel construction. The construction of polar codes for the mod-2 BAWGN channel, which serves as a building block for polar lattices, is given in Section III. In Section IV,
we derive the performance bounds and prove that polar lattices can achieve the sphere bound. The practical design and simulation results are presented in section V. We conclude the paper in Section VI.

\section{Background on Lattices and Multilevel Construction}
A lattice is a discrete subgroup of $\mathbb{R}^{n}$ which can be described in terms of a generator matrix:
\begin{eqnarray}
\Lambda=\{\lambda=Gx:x\in\mathbb{Z}^{n}\}, \notag\
\end{eqnarray}
where $G$ is an $n$-by-$n$ full-rank real matrix.

Let $V(\Lambda)$ be the fundamental volume of $\Lambda$, and $\sigma^2$ be the variance of the AWGN.
The VNR of an $n$-dimensional lattice $\Lambda$ is defined as
\begin{eqnarray}
\alpha^2(\Lambda,\sigma^2)=\frac{V(\Lambda)^\frac{2}{n}}{2\pi e\sigma^2}.
\label{eqn:VNR}
\end{eqnarray}

A sublattice $\Lambda'$ induces a partition (denoted by $\Lambda/\Lambda'$) of $\Lambda$ into equivalence groups modulo $\Lambda'$. Let $\Lambda_{1}/\cdots/\Lambda_{r-1}/\Lambda_{r}$ be an $n$-dimensional lattice partition chain. In the multilevel construction of lattices, for each
partition $\Lambda_{\ell}/\Lambda_{\ell+1}$ ($1\leq \ell \leq r-1$) a code $\mathcal{C}_{\ell}$ over $\Lambda_{\ell}/\Lambda_{\ell+1}$
selects a sequence of cosets of $\Lambda_{\ell+1}$. In this paper, we use the one-dimensional lattice partition chain $\mathbb{Z}/2\mathbb{Z}/\cdot\cdot\cdot/2^{r-1}\mathbb{Z}$. This is known as ``Construction D".

Let $\psi$ be the natural embedding of $\mathbb{F}_{2}^{N}$ into $\mathbb{Z}^{N}$, where $\mathbb{F}_{2}$ is the binary field. Let $\mathcal{C}_{1}\subseteq\mathcal{C}_{2}\cdot\cdot\cdot\subseteq\mathcal{C}_{r-1}$ be a family of nested binary linear codes and let $k_{\ell}=\text{dim}(\mathcal{C}_{\ell})$ for $1\leq\ell\leq r-1$. Let $b_{1}, b_{2},\cdots, b_{N}$ be a basis of $\mathbb{F}_{2}^{N}$ such that $b_{1},\cdots b_{k_{\ell}}$ span $C_{\ell}$. The lattice $L$ consists of all vectors of the form
\begin{eqnarray}
\sum_{\ell=1}^{r-1}2^{\ell-1}\sum_{j=1}^{k_{\ell}}\alpha_{j}^{(\ell)}\psi(b_{j})+2^{r-1}l, \notag\
\end{eqnarray}
where $\alpha_{j}^{(\ell)}\in\{0,1\}$ and $l\in\mathbb{Z}^{N}$. The fundamental volume of a lattice obtained in this way is given by
\begin{eqnarray}
V(L)=2^{rN-\sum\limits_{\ell=1}^{r}k_{\ell}}.
\label{eqn:volume}
\end{eqnarray}
Barnes-Wall lattices are constructed from Reed-Muller codes by Construction D. For example, the code formula\footnote{From the definition of Construction D we know that not all the lattices constructed by Construction D have such code formulas \cite{oggier}. However from the decoder's point there is still a binary code in each level due to the mod 2 operation. } of the 1024-dimensional Barens-Wall lattice is:
\begin{eqnarray}
BW_{1024}=\text{RM}(1,10)+2\text{RM}(3,10)+\cdot\cdot\cdot+2^{5}\mathbb{Z}^{1024}.
\label{eqn:BW1024}
\end{eqnarray}

\section{Polar Codes for Mod-2 BAWGN Channel}
\label{sec:polarcodes}
In this section, we will show how to construct a polar code for each level in the multilevel construction of polar lattices. The $\mathbb{Z}/2\mathbb{Z}$ channel is in fact a mod-2 BAWGN channel. The signal flow of the mod-2 BAWGN channel is shown in Fig. \ref{fig:mod2signal}. The mod-2 operation is applied within $[-1,1]$ not $[0,2]$.

\begin{figure}
    \centering
    \includegraphics[width=8.5cm]{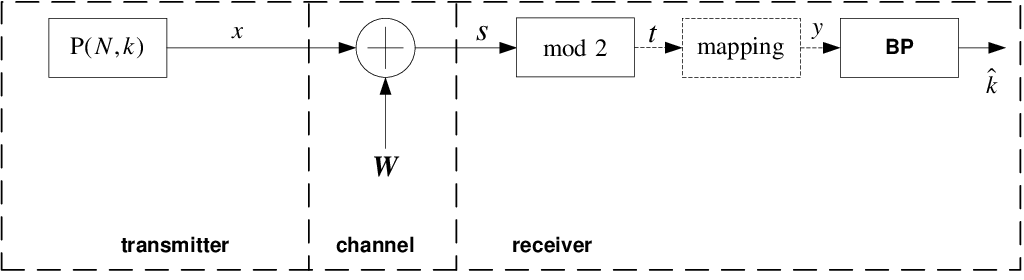}
    \caption{Signal flow of the mod-2 BAWGN channel. $P(N,k)$ represents the polar code with block length $N$ and $k$ information bits.}
    \vspace{-12pt}
    \label{fig:mod2signal}
\end{figure}

\subsection{Channel Capacity of Mod-2 BAWGN Channel}
In \cite{forney6}, the capacity of the mod-$\Lambda$ channel and $\Lambda/\Lambda'$ channel was derived. Therefore we can calculate the capacity of the mod-2 BAWGN channel.

Let $f(x)$ be the probability density function (PDF) of the Gaussian noise $x\in \mathbb{R}^{n}$
\begin{eqnarray}
f_{\sigma^{2}}(x)=(2\pi \sigma^{2})^{-n/2} e^{-\frac{\|x\|^{2}}{2\sigma^{2}}}.
\notag\
\end{eqnarray}
Then after the mod-$2\mathbb{Z}$ (mod-2) operation we have the density
\begin{eqnarray}
f_{2\mathbb{Z},\sigma^{2}}(x')=\sum_{i\in 2\mathbb{Z}}f_{\sigma^{2}}(x'+i), \quad x'\in [-1,1].
\label{eqn:mod2pdf}
\end{eqnarray}
The capacity of the mod-$\Lambda$ channel is
\begin{eqnarray}
C(\Lambda, \sigma^{2})=\text{log } V(\Lambda)-h(\Lambda, \sigma^{2}),
\label{eqn:modchannel}
\end{eqnarray}
where $h(\Lambda, \sigma^{2})$ is the differential entropy of the $\Lambda$-aliased noise
\begin{eqnarray}
h(\Lambda,\sigma^{2})=-\int_{\mathcal{V}(\Lambda)}f_{\Lambda,\sigma^{2}}(x')\text{ log } f_{\Lambda,\sigma^{2}}(x')dx' \notag\
\end{eqnarray}
where $\mathcal{V}(\Lambda)$ denotes the Voronoi region.

Furthermore, the capacity of the $\mathbb{Z}/2\mathbb{Z}$ channel is \cite{forney6}
\begin{eqnarray}
C(\mathbb{Z}/2\mathbb{Z})=C(2\mathbb{Z}, \sigma^{2})-C(\mathbb{Z}, \sigma^{2}). \notag\
\end{eqnarray}

The channel capacity of the $\mathbb{Z}/2\mathbb{Z}$ channel is shown in Fig. \ref{fig:channelcapacity}. Also shown are $2\mathbb{Z}/4\mathbb{Z}$ and $4\mathbb{Z}/8\mathbb{Z}$ channels which are also mod-2 BAWGN channels but upgraded versions (with smaller noise variance) of $\mathbb{Z}/2\mathbb{Z}$. For example, coding over $2\mathbb{Z}/4\mathbb{Z}$ is nothing special compared with coding over $\mathbb{Z}/2\mathbb{Z}$. The only difference is the signal power is increased by 4. Therefore it is equivalent to a $\mathbb{Z}/2\mathbb{Z}$ channel with the noise variance divided by 4. This structure simplifies our task to find good polar codes for each channel. We just focus on the $\mathbb{Z}/2\mathbb{Z}$ channel with different noise variances. The above finding also helps us to prove the nested structure of polar codes which are constructed for these mod-2 BAWGN channels.
\begin{figure}
    \centering
    \includegraphics[width=9cm]{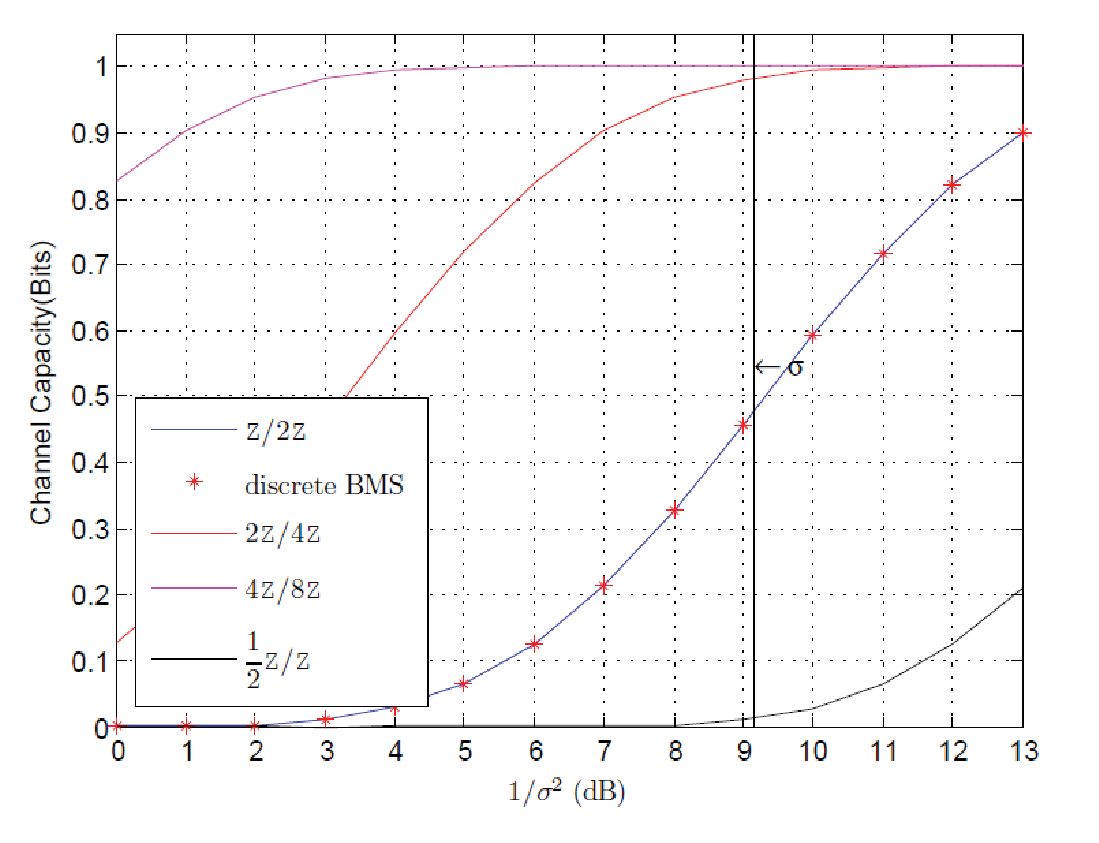}
    \vspace{-12pt}
    \caption{Channel capacity of the mod-2 BAWGN channel. The capacity of the discrete BSM channel is calculated in Sect. \ref{seq:constructionpolar} to show the negligible difference between the continuous mod-2 BAWGN channel and the quantized discrete channel.}
    \vspace{-12pt}
    \label{fig:channelcapacity}
\end{figure}

\subsection{Symmetry of Mod-2 BAWGN Channel}
Polar codes are proved to be able to achieve the capacity of output-symmetric channels. Therefore we need to show this mod-2 BAWGN channel is indeed an output-symmetric channel. The output of the mapping operation in Fig. \ref{fig:mod2signal} is
\begin{eqnarray}
y=2|t|-1,
\notag\
\end{eqnarray}
where the conditional PDF of $t$ can be derived from \eqref{eqn:mod2pdf}.

Then the conditional PDFs of $y$ are
\begin{eqnarray}
\left\{\begin{aligned}
&f(y|{x}=1)=\frac{1}{\sqrt{2\pi\sigma^2}}\sum\limits_{j=-\infty}^{+\infty}\text{exp}\left[-\frac{(y-1+4
j)^2}{8\sigma^2}\right], \notag\\
&f(y|{x}=0)=\frac{1}{\sqrt{2\pi\sigma^2}}\sum\limits_{j=-\infty}^{+\infty}\text{exp}\left[-\frac{(y+1+4
j)^2}{8\sigma^2}\right]. \notag\
\end{aligned}\right.
\end{eqnarray}
Therefore the PDFs of the output satisfy
\begin{eqnarray}
f(y|0)=f(-y|1), \text{  for } \text{all } y\in [-1,1].
\notag\
\end{eqnarray}
Clearly, it is a binary-input
output-symmetric channel.

\subsection{Construction of Polar Codes for Mod-2 BAWGN channel}
\label{seq:constructionpolar}
Let $W(y|x)$ be a BMS channel. Polar codes are block codes of length $N=2^m$ bits $\{u_i\}_{i=1}^{N}$ with binary alphabet $\mathcal{X}$. Let $I(W)$ be the symmetric capacity of $W$. Given the rate $R<I(W)$, a set of $RN$ rows of the generator matrix $G=\left[\begin{smallmatrix}1&0\\1&1\end{smallmatrix}\right]^{\otimes m}$ are to be chosen. This gives an $N$-dimensional channel $W_{N}(y_{1}^{N}|u_{1}^{N})$. The channels seen by each bit can be defined by \cite{polarcodes}
\begin{eqnarray}
W_{N}^{(i)}(y_{1}^{N},u_{1}^{i-1}|u_{i})=\sum\limits_{u_{i+1}^{N}\in \mathcal{X}^{N-i}}\frac{1}{2^{N-1}W_{N}(y_{1}^{N}|u_{1}^{N})}. \notag\
\end{eqnarray}
Ar{\i}kan proved that as $N$ grows $W_{N}^{(i)}$ approaches either an error-free channel or a completely noisy channel. The set of completely noisy (resp. error-free) bit channels is called frozen set $\mathcal{A}^{\mathcal{C}}$ (resp. free set $\mathcal{A}$). The corresponding bits in $\{u_i\}_{i=1}^{N}$ are called frozen bits (resp. free bits). Set $u_{i}=0$ for $i\in \mathcal{A}^{\mathcal{C}}$ and only send information bits within $\mathcal{A}$.

The block error probability $P_{B}$ can be upper bounded by the sum of the bit channels' Bhattacharyya parameters as $\overline{P}_{B}(W_{N})\leq\Sigma_{i\in \mathcal{A}}Z(W_{N}^{(i)})$. Constructing polar codes is equivalent
to choosing all the good indices. However, the complexity of the exact calculation appears to
be exponential in the block length. The authors in \cite{Ido,polarconstruction} proposed a practical quantization method to control the complexity. We use their method to construct polar codes for the mod-2 BAWGN channel. The details are given in \cite{Yan}.

We calculate the capacity of the quantized discrete channel which is generated by the Algorithm 1 in \cite{Yan}. We set $K=32$ so that $2K$ is the number of quantization intervals. Note that $1-I(W)$ is the expectation of the function $-x \text{ log }(x)-(1-x)\text{ log }(1-x)$ over the resulted distribution $P_{\chi}$. The results are shown in Fig. \ref{fig:channelcapacity}. It is seen that the capacity of this discrete BMS channel is almost the same as the theoretic capacity of the mod-2 BAWGN channel.

The approximation loss in the capacity due to the above quantization can be bounded in a precise manner. It was shown that the merging algorithms of polar code construction in \cite{Ido} and \cite{polarconstruction} share the same bound on the approximation loss. Denote the upper bound on the probability of error for the resulting degraded bit channel $\tilde{W}_{N}^{(i)}$ by $\overline{P}_{e}(\tilde{W}_{N}^{(i)}, K)$.
Given constant $\beta<1/2$, let the rate loss $\epsilon_{\textrm{loss}}$ be defined by
\begin{eqnarray}
\frac{1}{N}\left|\{i:\overline{P}_{e}(\tilde{W}_{N}^{(i)}, K)<2^{-N^{\beta}}\}\right|= I(W)-\epsilon_{\textrm{loss}}. \notag\
\end{eqnarray}
Then, according to Theorem 16 in \cite{Ido}, $\liminf_{m\to \infty} \epsilon_{\textrm{loss}}=\epsilon(W,K,\beta)$ which is not a function of $N$, and
\begin{eqnarray}
\lim_{K\rightarrow \infty}\epsilon(W,K,\beta)=0.
\label{eqn:rateloss}
\end{eqnarray}

It means we can get arbitrarily close to the optimal construction of a polar code if $K$ is sufficiently large.

\section{Sphere-Bound-Achieving Polar Lattices}
\label{sec:polarlattice}
In this section, we will prove the polar lattices can achieve the sphere bound. At first we will prove that the polar codes used in all levels are nested. Then performance bounds are derived, and polar lattices are proved to achieve the sphere bound asymptotically.

\subsection{Nested Polar Codes}
Assume $P(N,k_{\ell})$ $(1\leq\ell\leq r-1)$ is the polar code constructed at the $\ell$-th level. We select the channel indices according to the Bhattacharyya parameters. The following lemma shows that the polar codes in the multilevel construction are nested. This is a requirement of Construction D.

\begin{lemma}
Polar codes constructed in the multilevel construction are nested; i.e., $P(N,k_{1})\subseteq P(N,k_{2})\subseteq\cdot\cdot\cdot\subseteq P(N,k_{r-1})$.
\end{lemma}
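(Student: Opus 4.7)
The plan is to reduce the nesting of the codes to nesting of the frozen (equivalently, information) sets, and then to obtain the latter from the monotonicity of Bhattacharyya parameters under channel degradation. Recall that since all channel indices are frozen to $0$, a codeword of $P(N,k_\ell)$ has the form $u G$ with $u_{i}=0$ for $i\in\mathcal{A}_\ell^{c}$. Therefore $P(N,k_\ell)\subseteq P(N,k_{\ell+1})$ is equivalent to $\mathcal{A}_\ell^{c}\supseteq \mathcal{A}_{\ell+1}^{c}$, i.e., $\mathcal{A}_\ell\subseteq \mathcal{A}_{\ell+1}$.

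First, I would identify the underlying channel at each level. As explained in Section III, the partition channel $2^{\ell-1}\mathbb{Z}/2^{\ell}\mathbb{Z}$ is an ordinary mod-$2$ BAWGN channel whose noise variance is the original $\sigma^{2}$ rescaled by $4^{-(\ell-1)}$ (signal power is scaled by $4^{\ell-1}$). Denote this channel by $W_\ell$. Since decreasing the noise variance can only improve the channel, $W_{\ell+1}$ is a stochastically upgraded version of $W_\ell$; equivalently, $W_\ell$ is a degraded version of $W_{\ell+1}$ (one may exhibit an explicit degrading map by adding independent Gaussian noise of the right variance, then taking mod $2$, which commutes with the output mapping $y=2|t|-1$ because that mapping factors through the $[-1,1]$ representative).

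Next I would invoke the standard monotonicity property of polar transforms under degradation, which is proved for any BMS channel in the polar coding literature: if $W\preceq W'$ (degraded), then for every block length $N=2^m$ and every index $i$,
\begin{equation}
W_N^{(i)}\preceq (W')_N^{(i)},\qquad Z\bigl(W_N^{(i)}\bigr)\geq Z\bigl((W')_N^{(i)}\bigr). \notag
\end{equation}
Applied here, $Z(W_{\ell,N}^{(i)})\geq Z(W_{\ell+1,N}^{(i)})$ for all $i$. Because the free set is chosen as $\mathcal{A}_\ell=\{i:Z(W_{\ell,N}^{(i)})<\eta_\ell\}$ for a suitable threshold, and because at higher levels one typically retains more indices (the channel capacity is larger), the inclusion $\mathcal{A}_\ell\subseteq \mathcal{A}_{\ell+1}$ follows once one verifies that an index deemed good at level $\ell$ (small Bhattacharyya) remains good at level $\ell+1$ under the monotonicity above. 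Combined with the first step, this yields the desired chain $P(N,k_1)\subseteq P(N,k_2)\subseteq\cdots\subseteq P(N,k_{r-1})$.

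The main obstacle is the first part: rigorously establishing that $W_\ell$ is stochastically degraded from $W_{\ell+1}$ in spite of the nontrivial folding induced by the mod-$2$ operation and the subsequent $y=2|t|-1$ mapping. Once the degrading kernel is written down explicitly (essentially, adding the appropriate extra Gaussian component before folding, then re-folding and re-mapping), the rest is a direct appeal to a known property of polarization and a trivial set-theoretic comparison, so the bulk of the work lies in that channel-theoretic verification. If a threshold-based construction is not assumed, one can instead argue that the construction picks the $k_\ell$ indices with the smallest Bhattacharyya values and, using $Z(W_{\ell,N}^{(i)})\geq Z(W_{\ell+1,N}^{(i)})$, conclude that the $k_\ell$ best indices for $W_\ell$ are among the $k_{\ell+1}\geq k_\ell$ best indices for $W_{\ell+1}$, again yielding $\mathcal{A}_\ell\subseteq\mathcal{A}_{\ell+1}$.
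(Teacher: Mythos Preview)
Your main argument matches the paper's: both rely on (i) the scaling equivalence making $W_\ell$ a degraded version of $W_{\ell+1}$, (ii) the monotonicity $Z(W_{\ell,N}^{(i)})\geq Z(W_{\ell+1,N}^{(i)})$ inherited by the polarized subchannels (the paper cites this as Lemma~4.7 of Korada's thesis), and (iii) a common Bhattacharyya threshold defining each free set, from which $\mathcal{A}_\ell\subseteq\mathcal{A}_{\ell+1}$ is immediate. One point to tighten: you write $\eta_\ell$ as if the threshold may depend on the level; for the inclusion to follow you need $\eta_\ell\leq\eta_{\ell+1}$, and the paper simply takes all thresholds equal to $P_e=2^{-N^{\beta}}$.

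Your closing alternative, however, is incorrect. Selecting the $k_\ell$ indices with the smallest Bhattacharyya values does \emph{not} guarantee nesting, even given $Z(W_{\ell,N}^{(i)})\geq Z(W_{\ell+1,N}^{(i)})$ for every $i$. The pointwise inequality does not preserve the \emph{ranking} of indices across the two channels: an index that is among the $k_\ell$ smallest for $W_\ell$ need not be among the $k_{\ell+1}$ smallest for $W_{\ell+1}$. For instance, with three indices and $Z$-vectors $(0.10,0.20,0.30)$ at level $\ell$ and $(0.09,0.05,0.01)$ at level $\ell+1$, taking $k_\ell=1$ and $k_{\ell+1}=2$ selects index $1$ at level $\ell$ but indices $\{2,3\}$ at level $\ell+1$. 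The threshold-based rule is precisely what makes the nesting go through; the ``top-$k$'' rule does not.
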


\begin{proof}
By Lemma 4.7 in \cite{polarchannelandsource}, if a BMS channel $\tilde{W}$ is a degraded version of $W$, then $\tilde{W}_{N}^{(i)}$ is degraded with respect to $W_{N}^{(i)}$ and $Z(\tilde{W}_{N}^{(i)})\geq Z(W_{N}^{(i)})$. Let the target error probability be $P_{e}=2^{-N^{\beta}}$. The codewords are generated by $x^{N}=u_{\mathcal{A}}G_{\mathcal{A}}$, where $G_{\mathcal{A}}$ is the submatrix of $G$ formed by rows with indices in the free set $\mathcal{A}$. The free sets for these two channels are
\begin{eqnarray}
\left\{\begin{aligned}
&\mathcal{A}&=&\{i:Z(W_{N}^{(i)})<P_{e}\}, \\ \notag\
&\mathcal{\tilde{A}}&=&\{i:Z(\tilde{W}_{N}^{(i)})<P_{e}\}. \notag\
\end{aligned}\right.
\end{eqnarray}

Due to the fact that $Z(\tilde{W}_{N}^{(i)})\geq Z(W_{N}^{(i)})$, $\mathcal{\tilde{A}}\subseteq \mathcal{A}$. If we construct polar codes $P(N,\mathcal{A})$ over $W$ and $\tilde{P}(N,\mathcal{\tilde{A}})$ over $\tilde{W}$, $G_{\mathcal{\tilde{A}}}$ is a submatrix of $G_{\mathcal{A}}$. Therefore $\tilde{P}(N,\mathcal{A})\subseteq P(N,\mathcal{\tilde{A}})$.

For the one-dimensional lattice partition, the $2^{\ell-1}\mathbb{Z}/2^{\ell}\mathbb{Z}$ channel with noise variance $\sigma^{2}$ is equivalent to the $2^{\ell}\mathbb{Z}/2^{\ell+1}\mathbb{Z}$ channel with noise variance $4\sigma^{2}$ for $1\leq \ell \leq r$. Therefore the channel of the $\ell$-th level is degraded with respect to the channel of the $(\ell+1)$-th level, and consequently, $P(N,k_{\ell})\subseteq P(N,k_{\ell+1})$.
\end{proof}

\subsection{Performance Bounds}
As we use a multistage decoding algorithm, the overall (block) error probability $P_{e}(L, \sigma^{2})$ of a polar lattice is upper-bounded by the sum of the block error probabilities at individual levels, by the union bound. The block error probability of the polar code for each level is denoted by $P_{B}(\mathcal{C}_{\ell}, \sigma^{2})$ for $(1\leq \ell\leq r-1)$ and by $P_{B}(\Lambda_{r}^{N}, \sigma^{2})$ for the $r$-th level. Then
\begin{eqnarray}
P_{e}(L, \sigma^{2})&\leq& \sum_{\ell=1}^{r-1}P_{B}(\mathcal{C}_{\ell}, \sigma^{2})+P_{B}(\Lambda_{r}^{N}, \sigma^{2}),
\label{eqn:unionbound}
\end{eqnarray}
where the error probability of $\Lambda_{r}^{N}$ is given by
\begin{eqnarray}
P_{B}(\Lambda_{r}^{N},\sigma^{2})=1-\int_{\mathcal{V}(\Lambda_{r}^{N})}f_{\sigma^{2}}(x)dx.
\label{eqn:erroroflambda}
\end{eqnarray}

As shown in the previous subsection, we can construct nested polar codes with rates $\kappa(\mathcal{C}_{\ell})=\frac{k_{\ell}}{N}=C(\Lambda_{\ell}/\Lambda_{\ell+1}, \sigma^{2})-\epsilon_{\textrm{loss}}(\ell)$ such that the block error probability in each level $P_{B}(\mathcal{C}_{\ell}, \sigma^{2})$ is upper bounded by $2^{-N^{\beta_{\ell}}}$, for any $\beta_{\ell}<\frac{1}{2}$.

The volume of the lattices constructed by Construction D is
\begin{eqnarray}
V(L)=2^{-N\sum\limits_{\ell=1}^{r-1}\kappa_{\ell}}V(\Lambda_{r})^{N}.
\label{eqn:volumeofL}
\end{eqnarray}
Therefore the logarithmic VNR of $L$ is
\begin{eqnarray}
\log_2\alpha^2(L,\sigma^2)
&=&\log_2\frac{V(L)^\frac{2}{N}}{2\pi e\sigma^2} \\
&=&\log_2\frac{2^{-2\sum\limits_{\ell=1}^{r-1}\kappa_{\ell}}V(\Lambda_{r})^{2}}{2\pi e\sigma^2} \notag\\
&=&-2\sum\limits_{\ell=1}^{r-1}\kappa_{\ell}+2\log_2 V(\Lambda_{r})- \log_22\pi e\sigma^2. \notag\
\label{eqn:VNRwithe}
\end{eqnarray}
Define
\begin{eqnarray}
\left\{\begin{aligned}
&\epsilon_{1}=C(\Lambda_{1},\sigma^2), \notag\\
&\epsilon_{2}=h(\sigma^2)-h(\Lambda_{r},\sigma^2), \notag\\
&\epsilon_{3}=C(\Lambda_{1}/\Lambda_{r}, \sigma^{2})-\sum\limits_{\ell=1}^{r-1}\kappa_{\ell}, \notag\
\end{aligned}\right.
\end{eqnarray}
where $h(\sigma^2)=\frac{1}{2}\log_22\pi e\sigma^2$ which is the differential entropy of the Gaussian noise.

Then we have
\begin{eqnarray}
\log_2\alpha^2(L,\sigma^2)
&=&2(\log_2V(\Lambda_{r})-C(\Lambda_{1}/\Lambda_{r}, \sigma^{2})\notag\\ &+& \epsilon_{3}-\frac{1}{2}\log_22\pi e\sigma^2) \notag\\
&=&2(\log_2V(\Lambda_{r})-\log_2V(\Lambda_{r})+h(\Lambda_{r},\sigma^2)\notag\\\ &+&\epsilon_{1}+ \epsilon_{3}-\frac{1}{2}\log_22\pi e\sigma^2) \notag\\
&=&2(\epsilon_{1}-\epsilon_{2}+\epsilon_{3}). \notag\
\end{eqnarray}

It was shown in \cite{forney6} that $\epsilon_{2}\approx\pi P_e(\Lambda_{r},\sigma^2)$, which is negligible compared to the other two terms. We can drop the term $\epsilon_{2}$ such as to obtain an upper bound
\begin{eqnarray}
\log_2\alpha^2(L,\sigma^2)
\leq 2(\epsilon_{1}+\epsilon_{3}).
\label{eqn:minimumVNR}
\end{eqnarray}

From \cite{cong2} we can derive the bound $\epsilon_{1}\leq\epsilon_{\Lambda_{1}}(\sigma)\log_2 e$ where $\epsilon_{\Lambda_{1}}(\sigma)$ is the flatness factor. We may choose the lattice $\Lambda_{1}$ sufficiently fine so that $\epsilon_{\Lambda_{1}}(\sigma)$ is negligible.

By the union bound, we may choose $V(\Lambda_{r})$ large enough that $P_{B}(\Lambda_{r}^{N}, \sigma^{2})\approx0$.

From \eqref{eqn:rateloss}, $\epsilon_{3}$ is the sum of the rate losses
\begin{eqnarray}
\epsilon_{3}=\sum\limits_{\ell=1}^{r-1}\epsilon_{\textrm{loss}}({\ell}). \notag\
\end{eqnarray}
We have
\begin{eqnarray}
\lim_{K\rightarrow \infty}\lim_{N\rightarrow \infty}\epsilon_{3}=\lim_{K\rightarrow \infty}\sum\limits_{\ell=1}^{r-1}\epsilon(\Lambda_{\ell}/\Lambda_{\ell+1},K,\beta_{\ell})=0, \notag\
\end{eqnarray}
and the sum error probability of polar codes is upper bounded by
\begin{eqnarray}
\sum_{\ell=1}^{r-1}P_{B}(\mathcal{C}_{\ell}, \sigma^{2})\leq \sum_{\ell=1}^{r-1}N2^{-N^{\beta_{\ell}}} \notag\
\end{eqnarray}
which can be made arbitrarily small by increasing the block length of polar codes $N$.

In conclusion, we have the following theorem:
\begin{theorem}
The error probability of polar lattice $L$ is bounded by
\begin{eqnarray}
P_{e}(L, \sigma^{2})\leq \sum_{\ell=1}^{r-1}N2^{-N^{\beta_{\ell}}}+N\left(1-\int_{\mathcal{V}(\Lambda_{r})}f_{\sigma^{2}}(x)dx\right),
\label{eqn:errorbound}
\end{eqnarray}
with the logarithmic VNR bounded by (\ref{eqn:minimumVNR}). In the limit as $r,K,N\to \infty$, $L$ can achieve the sphere bound, i.e.,
$\alpha^2(L,\sigma^2)\rightarrow1$
and
$P_{e}(L, \sigma^{2})\rightarrow0$.
\end{theorem}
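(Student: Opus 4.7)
The plan is to assemble the two conclusions of the theorem from ingredients already established. For the error-probability bound \eqref{eqn:errorbound}, I would start from the multistage-decoding union bound \eqref{eqn:unionbound} and refine each term. For each coded level $\ell$, Ar{\i}kan's bound $\overline{P}_B(W_N)\leq\sum_{i\in\mathcal{A}_\ell}Z(W_N^{(i)})$ combined with the design rule $Z(W_N^{(i)})<2^{-N^{\beta_\ell}}$ for $i\in\mathcal{A}_\ell$ and the trivial estimate $|\mathcal{A}_\ell|\leq N$ yields $P_B(\mathcal{C}_\ell,\sigma^2)\leq N\cdot 2^{-N^{\beta_\ell}}$. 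For the uncoded top level, the product structure of $\Lambda_r^N$ together with a coordinate-wise union bound applied to \eqref{eqn:erroroflambda} gives $P_B(\Lambda_r^N,\sigma^2)\leq N\bigl(1-\int_{\mathcal{V}(\Lambda_r)}f_{\sigma^2}(x)\,dx\bigr)$. Substituting both estimates into \eqref{eqn:unionbound} recovers \eqref{eqn:errorbound}.

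For the VNR statement, I would simply invoke the algebraic identity established earlier in Section~IV-B, which expresses $\log_2\alpha^2(L,\sigma^2)=2(\epsilon_1-\epsilon_2+\epsilon_3)$. Using the observation from \cite{forney6} that $\epsilon_2\approx\pi P_s(\Lambda_r,\sigma^2)$ is dominated by the shaping-lattice error that will be driven to zero anyway, I drop $\epsilon_2$ to obtain the upper bound \eqref{eqn:minimumVNR}. Nothing further needs to be computed here; the bound is declarative.

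The remaining work is the asymptotic claim, which I would handle by sending the three parameters to infinity in nested order. With $r$ and $K$ fixed, letting $N\to\infty$ kills each term $N\,2^{-N^{\beta_\ell}}$, and by Theorem~16 of \cite{Ido} the per-level rate loss $\epsilon_{\mathrm{loss}}(\ell)$ converges to the $N$-independent constant $\epsilon(\Lambda_\ell/\Lambda_{\ell+1},K,\beta_\ell)$. Next, $K\to\infty$ drives each such constant, and hence $\epsilon_3$, to zero by \eqref{eqn:rateloss}. Finally, $r\to\infty$: taking $\Lambda_1$ arbitrarily fine relative to $\sigma$ forces the flatness factor $\epsilon_{\Lambda_1}(\sigma)$, and therefore $\epsilon_1$, to be negligible via the estimate from \cite{cong2}, while $\Lambda_r$ simultaneously becomes coarse enough that $N\bigl(1-\int_{\mathcal{V}(\Lambda_r)}f_{\sigma^2}(x)dx\bigr)\to 0$. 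Together with \eqref{eqn:minimumVNR}, this yields $\alpha^2(L,\sigma^2)\to 1$ and $P_e(L,\sigma^2)\to 0$.

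The delicate point, I expect, is the coordination of the limits rather than any single inequality. Having $\Lambda_1$ fine and $\Lambda_r$ coarse pulls the partition chain in opposite directions; one must exhibit an explicit sequence of chains (and a compatible rescaling relative to $\sigma$) along which $\epsilon_1$, the shaping-lattice error, and the implicit $\epsilon_2\approx\pi P_s(\Lambda_r,\sigma^2)$ all shrink fast enough that dropping $\epsilon_2$ remains legitimate and no piece dominates unexpectedly. Once such a sequence is in place, the remaining passages to the limit are the clean arguments above.
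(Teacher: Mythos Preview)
Your proposal is correct and follows essentially the same route as the paper: union bound \eqref{eqn:unionbound}, Ar{\i}kan's Bhattacharyya bound with $|\mathcal{A}_\ell|\le N$ for the coded levels, coordinate-wise union bound for $\Lambda_r^N$, the identity $\log_2\alpha^2=2(\epsilon_1-\epsilon_2+\epsilon_3)$ with $\epsilon_2$ dropped, and then the separate limits in $N$, $K$, and $r$ using \eqref{eqn:rateloss} and the flatness-factor estimate. Your final paragraph on coordinating the limits (fine $\Lambda_1$ versus coarse $\Lambda_r$ via rescaling) is a fair caveat that the paper itself does not work out explicitly.
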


\begin{remark}
Note that the integral in (\ref{eqn:errorbound}) can simply be represented by the Q-function, since $\Lambda_r$ is a scaled version of $\mathbb{Z}$.
\end{remark}

\section{Practical Design and Simulation Results}
\label{sec:example}
We give a design example of polar lattices in this section. The design follows the capacity rule. We use the one-dimensional lattice partition $\mathbb{Z}/2\mathbb{Z}/\cdot\cdot\cdot/2^{r-1}\mathbb{Z}$.
We suggest using $\sigma$ not VNR
as the reference parameter to choose component codes. This is because $\sigma$ can be calculated in each level and can fully represent the channel.

One can determine the number of valid levels according to the capacity calculation as shown in Fig. \ref{fig:channelcapacity}. If the actual rate of a
level is almost 1 and still can achieve the target block error probability $P_{B}$ for the reference $\sigma$, it is unnecessary. This is because
the rate from this level will be canceled by $\frac{1}{2}\log_2 2\pi e \sigma_{1}^{2}$. The effective levels are those with an actual rate visibly within $(0,1)$. In other words, adding levels whose capacities are close to 1 or 0 do not noticeably change the VNR. Therefore, we choose two levels, which was also suggested in \cite{forney6}.


The multilevel construction and the multistage decoding are shown in Fig. \ref{fig:1D}. For the $i$-th level, $\alpha^{(i)} $ are information bits, $b_{1},b_{2},\cdots,b_{k_{i}}$ are a set of basis which are chosen from the matrix $G_{N}=\left[\begin{smallmatrix}1&0\\1&1\end{smallmatrix}\right]^{\otimes
m}$ according to the polarization rule for the $i$-th level's channel, $\sigma_{i}$
means the standard deviation. Note the difference of $\sigma$ between each level is only a factor of 2.

\begin{figure}
    \centering
    \includegraphics[width=8cm]{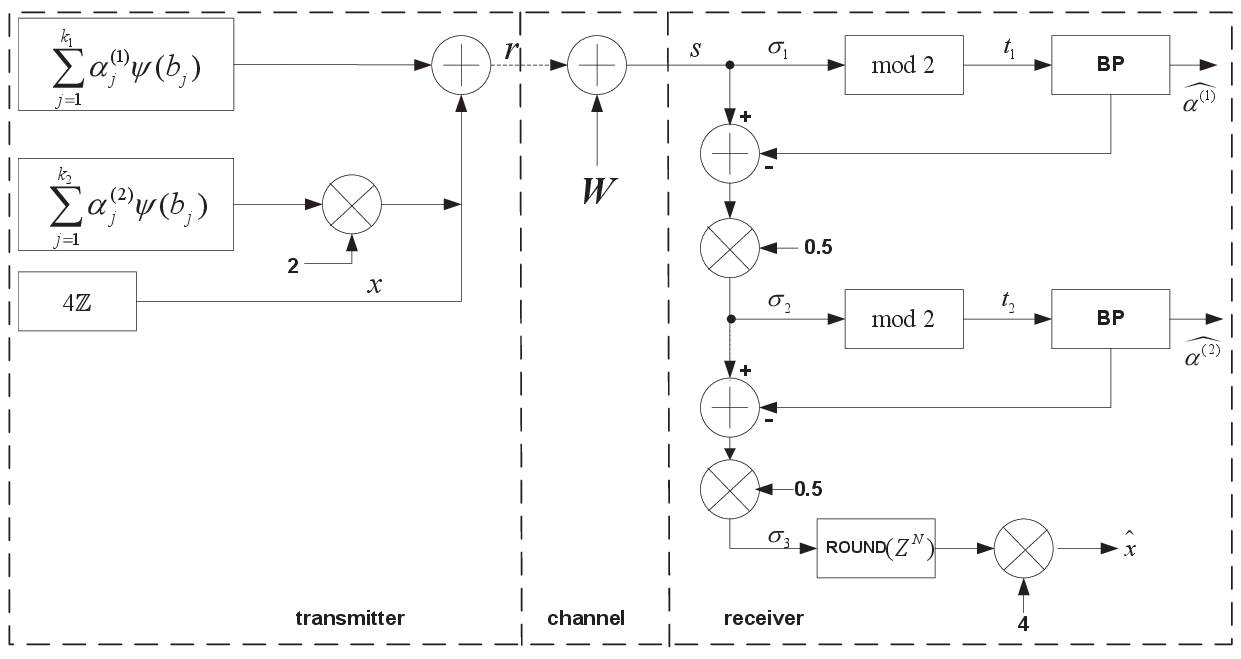}
    \caption{A polar lattice with two levels.}
    \vspace{-12pt}
    \label{fig:1D}
\end{figure}

We start from the bottom level to the top level. The bottom level is
a $\mathbb{Z}^{N}$ lattice decoder.
First, from \eqref{eqn:erroroflambda}, if we set the target error probability to $10^{-5}$, $\sigma_{3}\approx 0.08719$. Then, move to the middle level. $\sigma_{2}=2*\sigma_{3}=0.1744$. From Fig. \ref{fig:channelcapacity}, the channel capacity of the middle
level is $C(\mathbb{Z}/2\mathbb{Z},\sigma_{2})=0.9830$. Finally, the capacity of the top level is 0.4775 as $\sigma_{1}=0.3488$. Our goal is to find two polar codes with rates close to
their respective capacities and block error probabilities $\leq 10^{-5}$ over the mod-2 BAWGN channel.

From \eqref{eqn:VNRwithe},
\begin{eqnarray}
\log_2\alpha^2(L,\sigma^2)=2\left(\log_2 V(\Lambda_{r})-R-\frac{1}{2}\log_2 2\pi e \sigma_{1}^{2}\right),
\label{eqn:gap}
\end{eqnarray}
where $R$ is the sum rate of the component polar codes.

From our simulations, for $N=1024$, we
found that $P_{B}(\frac{k_{1}}{N}=0.22, \sigma_{1}^{2})\approx 2\times10^{-5}$, $P_{B}(\frac{k_{2}}{N}=0.9, \sigma_{2}^{2})\approx 2\times10^{-5}$ and
$P_{e}(\Lambda_{r}^{N}, \sigma_{3}^{2}) \approx 1\times10^{-5}$. Therefore $R=0.22+0.9$.
According to \eqref{eqn:unionbound} and \eqref{eqn:gap},
\begin{eqnarray}
P_{e}(L,\sigma_{1}^{2})\approx 5\times10^{-5},
\end{eqnarray}
and
\begin{eqnarray}
10\log_{10}\alpha^2(L,\sigma_{1}^{2})=2.12 \text{ dB},
\end{eqnarray}
which is not surprisingly very close to our simulation
shown in Fig. \ref{fig:1D_bw}. Furthermore, we also demonstrate the performance of polar lattice with $N=8192$ and $R=0.33+0.956$ in Fig. \ref{fig:1D_bw}.
This simulation implies that the performance of the component codes is very important to the multilevel
lattice codes. The gap to the sphere bound is largely due to the rate loss of polar codes.

Now we go back to the Barnes-Wall lattices. According to Fig. \ref{fig:channelcapacity}, there are only 2 valid levels, but the Barnes-Wall lattice \eqref{eqn:BW1024} has 6 levels for $N=1024$. The rate of the first level is $0.01$, which is much bigger than the capacity of the first level when VNR is low. Another reason is the relatively weak error-correction ability of Reed-Muller codes. Therefore the error probability of the first level will be very high in the low VNR region as shown in Fig. \ref{fig:1D_bw}.

\section{Conclusions}
In this paper, we have followed Forney \textit{et al.}'s multilevel approach to construct polar lattices. The channel capacity of each level is calculated and a polar code is constructed to achieve its capacity accordingly. Since polar lattices are as explicit as polar codes, their construction is equally efficient. It is worth mentioning that our polar lattice proposed in Section \ref{sec:example} has almost the same gap (1.45 dB) to the Poltyrev capacity as the 1000-dimensional LDLC \cite{ldlc} with the settings of \cite[Fig. 13]{finiteIC}. The theoretic minimum gap for the 1000-dimensional lattices is about 1 dB \cite{finiteIC}. Furthermore, compared with existing schemes \cite{ldpclattice,IntegerLDLC,ldlc,turbolattice}, polar lattices are distinguished by their provable AWGN-goodness and low complexity, namely, they asymptotically achieve the sphere bound with multistage decoding.


%



\section*{Acknowledgments}

The authors would like to thank Prof. Jean-Claude Belfiore for helpful discussions.
The work of Yanfei Yan is supported by the China Scholarship Council.




\bibliographystyle{IEEEtran}
\bibliography{yanfei}

%








\end{document}